  \providecommand\BibTeX{{%
    \normalfont B\kern-0.5em{\scshape i\kern-0.25em b}\kern-0.8em\TeX}}}
\newtheorem{definition}{Definition}[section]
\newtheorem{proposition}{Proposition}[section]
\newcommand{\ie}{\emph{i.e., }}
\newcommand{\eg}{\emph{e.g., }}
\newcommand{\wrt}{\emph{w.r.t. }}
\newcommand{\Mat}[1]{\mathbf{#1}}
\newcommand{\Set}[1]{\mathcal{#1}}
\begin{document}

\title{Cross Pairwise Ranking for Unbiased Item Recommendation}

\author{Qi Wan}
\email{wqq17@mail.ustc.edu.cn}
\orcid{0000-0003-4065-0756}
\affiliation{%
  \institution{University of Science and Technology of China}
  \country{China}
}

\author{Xiangnan He}
\authornote{Xiangnan He is the corresponding author.}
\email{xiangnanhe@gmail.com}
\affiliation{%
  \institution{University of Science and Technology of China}
  \country{China}
}

\author{Xiang Wang}
\email{xiangwang1223@gmail.com}
\affiliation{%
  \institution{University of Science and Technology of China}
  \country{China}
}

\author{Jiancan Wu}
\email{wjc1994@mail.ustc.edu.cn}
\affiliation{%
  \institution{University of Science and Technology of China}
  \country{China}
}

\author{Wei Guo}
\email{guowei67@huawei.com}
\affiliation{%
  \institution{Huawei Noah’s Ark Lab}
  \country{China}
}

\author{Ruiming Tang}
\email{tangruiming@huawei.com}
\affiliation{%
  \institution{Huawei Noah’s Ark Lab}
  \country{China}
}

\renewcommand{\shortauthors}{Trovato and Tobin, et al.}

\begin{abstract}
  Most recommender systems optimize the model on observed interaction data, which is affected by the previous exposure mechanism and exhibits many biases like popularity bias.
  The loss functions, such as the mostly used pointwise Binary Cross-Entropy and pairwise Bayesian Personalized Ranking, are not designed to consider the biases in observed data. As a result, the model optimized on the loss would inherit the data biases, or even worse, amplify the biases. For example, a few popular items take up more and more exposure opportunities, severely hurting the recommendation quality on niche items --- known as the notorious Mathew effect.

  In this work, we develop a new learning paradigm named \textit{Cross Pairwise Ranking} (CPR) that achieves unbiased recommendation without knowing the exposure mechanism.
  Distinct from inverse propensity scoring (IPS), we change the loss term of a sample --- we innovatively sample multiple observed interactions once and form the loss as the combination of their predictions.
  We prove in theory that this way offsets the influence of user/item propensity on the learning, removing the influence of data biases caused by the exposure mechanism.
  Advantageous to IPS, our proposed CPR ensures unbiased learning for each training instance without the need of setting the propensity scores.
  Experimental results demonstrate the superiority of CPR over state-of-the-art debiasing solutions in both model generalization and training efficiency.
  The codes are available at \url{https://github.com/Qcactus/CPR}.
\end{abstract}

\begin{CCSXML}
  <ccs2012>
  <concept>
  <concept_id>10002951.10003317.10003338</concept_id>
  <concept_desc>Information systems~Retrieval models and ranking</concept_desc>
  <concept_significance>500</concept_significance>
  </concept>
  </ccs2012>
\end{CCSXML}

\ccsdesc[500]{Information systems~Retrieval models and ranking}

\ccsdesc[500]{Information systems~Recommender systems}

\keywords{Unbiased Learning-to-Rank, Popularity Bias, Recommendation}

\maketitle

\section{Introduction}

Recommender systems have become the core of many online platforms, including e-Commerce, streaming media, social networks etc.
Existing models mostly follow a supervised learning paradigm, which treats historical interactions (\eg implicit feedback \cite{bpr,neumf,lightgcn} like clicks and purchases) as the labeled data, and learns user-item relevance by fitting the labeled data.
Two types of loss functions are intensively used to optimize the model parameters:
(1) pointwise loss, which captures a user's preference on single item by minimizing the discrepancy between the target and predicted relevance score of a user-item pair, \eg Binary Cross-Entropy (BCE)~\cite{neumf,wide_deep,logistic_mf} and Mean Squared Error (MSE)~\cite{wmf,fast_mf};
and (2) pairwise loss, which models a user's preference on two items by encouraging the prediction of positive item to be higher than that of the negative item, \eg Bayesian Personalized Ranking (BPR)~\cite{bpr}.

However, such standard loss functions are prone to the biases inherent in the observed interaction data.
For example, the interaction data usually presents a long-tail distribution on item popularity \cite{multi-sided}, that is, a small number of popular items occupy most interactions.
Recent studies show that item popularity confounds user true preference and observed interactions \cite{pda,macr} and should be properly handled for quality recommendation.
Nonetheless, mainstream loss functions are designed to recover the historical data --- assuming that observed interactions reflect user preference faithfully --- while leaving the bias effect untouched.
Hence, when building recommender systems with these losses, a few popular items take up more and more exposure opportunities, severely hurting the recommendation quality on niche items.

How to mitigate the bias and perform unbiased estimation of user preference has become one central theme in recommendation \cite{bias_debias}.
One prevalent solution is Inverse Propensity Scoring (IPS)~\cite{relmf,ubpr,uebpr,du}, which reweights each data sample by the inverse of its propensity score (i.e., the exposure probability). Despite that IPS is unbiased in theory, it suffers from practical limitations:
(1) it is challenging to accurately estimate the propensity score of each sample, since the exposure mechanism is seldom known \cite{rec_treat}; and (2) the reweighted loss usually exhibits a high variance especially for implicit feedback~\cite{relmf,ubpr,ips_norm}, which implies that the losses of individual samples fluctuate heavily from their expected values.

In this work, we propose a sample-level unbiased learning paradigm, Cross Pairwise Ranking (CPR), without the need of quantifying the exposure mechanism.
The key assumption is that the propensity score can be factorized into user propensity, item propensity, and user-item relevance. With this assumption, we can offset the impact of the user/item propensity by designing a novel loss function that combines purposefully the chosen user-item pairs. Briefly speaking, superior to IPS, the proposed CPR ensures unbiased learning for each individual sample and suits better for debiasing the learning from implicit feedback. Furthermore, for faster convergence in training, we borrow the idea of hard negative sampling \cite{dns,improve_sampling} and devise a dynamic sampling method for CPR.

In a nutshell, we summarize the contributions as follows:
\begin{itemize}
  \item We analyze the commonly used pointwise and pairwise loss functions in recommender system from a new perspective, showing that they are biased in deriving the correct ranking of user preference.
  \item We propose a new learning method CPR and offer theoretical analyses of its unbiasedness in estimating user preference on individual samples.
  \item We conduct empirical studies on two benchmark datasets, validating the effectiveness of CPR in recommendation debiasing on multiple backbone models.
\end{itemize}
\section{Preliminaries}

We first introduce the basic notations. Then we formulate a new perspective on the unbiased learning and analyze the biasedness of the pointwise and pairwise loss from this perspective.

\subsection{Notations}

Let $u \in \Set{U}$ be a user and $i \in \Set{I}$ be an item. We call a user-item pair a positive pair if we observe an interaction between them, and call it a negative pair otherwise.

Following the notations in previous literature~\cite{relmf,ubpr}, we define three binary variables: the interaction variable $Y_{u,i}$, the relevance variable $R_{u,i}$ and the observation variable $O_{u,i}$.
$Y_{u,i}=1$ represents an interaction happens between user $u$ and item $i$, and $0$ otherwise.
$R_{u,i}=1$ means the user likes the item, and $0$ otherwise.
$O_{u,i}=1$ means the user observes the item, and $0$ otherwise.
An interaction happens when the user likes and observes the item, \ie $Y_{u,i}=R_{u,i} \cdot O_{u,i}$, leading to:
\begin{equation}\label{basic_prob}
    \begin{split}
        P(Y_{u,i}=1)&=P(R_{u,i}=1,O_{u,i}=1)\\
        &=P(R_{u,i}=1)P(O_{u,i}=1 \mid R_{u,i}=1).
    \end{split}
\end{equation}
$P(R_{u,i}=1)$ indicates the relevance probability between user $u$ and item $i$.
$P(O_{u,i}=1 \mid R_{u,i}=1)$ is the exposure probability --- the probability of item $i$ being observed by user $u$ conditioned on their relevance. Here we do not assume the independence between $R_{u,i}$ and $O_{u,i}$ as in previous studies~\cite{relmf,ubpr}, since in the real world, users are more likely to be exposed to highly relevant items because of the exposure mechanism of the previous recommender that also captures user-item relevance.
Besides, due to the bias in the previous exposure mechanism, among all potentially relevant items, users are more likely to be exposed to the popular ones, \ie $P(O_{u,i}=1 \mid R_{u,i}=1)$ of popular items is generally higher than that of niche items, which increases their interaction probability $P(Y_{u,i}=1)$ and makes the recommendation lean towards them.

Let ${s}_{u,i}$ be $\ln P(R_{u,i}=1)$. We call ${s}_{u,i}$ the true relevance score between user $u$ and item $i$. More precisely, ${s}_{u,i}=\ln \left[ a \cdot P(R_{u,i}=1) \right]$, where $a$ is a positive constant that changes the scale of ${s}_{u,i}$; we ignore this constant since it does not matter in the following analyses. We use $\hat{s}_{u,i}$ to denote the predicted relevance score, which is typically obtained by feeding the user and item embeddings into an interaction function like inner product or a neural network \cite{survey_neurec}. $\hat{s}_{u,i}$ is expected to estimate ${s}_{u,i}$, but the challenge is that we only have observation data on $Y_{u, i}$, whereas $R_{u,i}$ and $O_{u,i}$ are unobservable.

\subsection{Definition of Unbiasedness}

Previous studies on IPS \cite{relmf,ubpr} define the unbiasedness of losses from the statistical view of expectation, making the loss expectation equal to the ideal loss by reweighting samples. However, it suffers from some practical limitations, such as the difficulty of setting propensity scores and the high variance of the reweighted loss. We leave the discussion to Section \ref{sec:ips}.

In this work, we propose a new view on the unbiasedness of loss, that is free from the limitations of IPS:
\begin{definition}\label{def:unbias}
    A loss function $\mathcal{L}$ is unbiased if it optimizes the ranking of predicted user-item relevance scores towards that of the true relevance scores:
    $$\hat{s}_{u,i}>\hat{s}_{u,j} \Leftrightarrow s_{u,i}>s_{u,j} \text{ when $\mathcal{L}$ converges.}$$
\end{definition}

In the rest of this section, we will briefly introduce the traditional pointwise and pairwise losses, and analyze their biasedness from this new perspective.

\subsection{Biasedness of Pointwise and Pairwise Loss}

To learn user preference from historical interactions, most models follow the supervised learning paradigm, which encourages the relevance scores to recover the labels.
Two standard loss functions are widely used to optimize the model parameters:
(1) The pointwise loss like BCE \cite{neumf,wide_deep,logistic_mf} aims to capture the user preference on single items via minimizing the prediction discrepancy:
\begin{equation}\label{bce_mse}
    \mathcal{L}_{BCE}=-\sum_{(u,i)\in \Set{D}} Y_{u,i} \ln \sigma(\hat{s}_{u,i}) + (1-Y_{u,i}) \ln (1-\sigma(\hat{s}_{u,i})),
\end{equation}
where $\Set{D}=\Set{D}^+ \cup \Set{D}^-$; $\Set{D}^+$ is the set of observed interactions, and $\Set{D}^-$ is the set of missing data, and $\sigma(\cdot)$ is the sigmoid function.
(2) The pairwise loss, such as BPR \cite{bpr}, models the user preference on two items.
It encourages the prediction of the positive item to be higher than that of the negative item:
\begin{equation}\label{bpr}
    \mathcal{L}_{BPR}=-\sum_{(u,i,j)\in \Set{D}_S} \ln \sigma(\hat{s}_{u,i}-\hat{s}_{u,j}),
\end{equation}
where $\Set{D}_S=\{(u,i,j) \mid Y_{u,i}=1,\ Y_{u,j}=0\}$.
Next, we analyze how they lead to the biased estimation of users' true preference.

\subsubsection{Biasedness of Pointwise Loss}

\begin{proposition}
    Pointwise loss is biased.
\end{proposition}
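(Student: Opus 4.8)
The plan is to show that the equivalence required by Definition~\ref{def:unbias} fails for the pointwise loss: I would exhibit a configuration in which $\mathcal{L}_{BCE}$ in Equation~\eqref{bce_mse} has reached its optimum, yet the induced ordering of predicted scores disagrees with the ordering of true relevance scores. Everything hinges on identifying which quantity the loss actually drives $\hat{s}_{u,i}$ toward at convergence.

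First I would characterize the minimizer of the pointwise loss. Treating each $Y_{u,i}$ as a Bernoulli outcome and examining the per-pair cross-entropy term in expectation, write $q := \sigma(\hat{s}_{u,i})$ and $p := P(Y_{u,i}=1)$; the expected term $-[p\ln q + (1-p)\ln(1-q)]$ is strictly convex in $q$ and uniquely minimized at $q = p$. Hence, assuming the model is expressive enough to fit each pair, at convergence $\sigma(\hat{s}_{u,i}) = P(Y_{u,i}=1)$ for every pair. Because $\sigma(\cdot)$ is strictly monotone, the ordering of the learned scores $\hat{s}_{u,i}$ coincides with the ordering of the interaction probabilities $P(Y_{u,i}=1)$, not with the ordering of the relevance probabilities $P(R_{u,i}=1)$.

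Next I would invoke the factorization in Equation~\eqref{basic_prob}, namely $P(Y_{u,i}=1) = P(R_{u,i}=1)\,P(O_{u,i}=1 \mid R_{u,i}=1)$, which reveals that the converged ranking is governed by the product of relevance and exposure rather than by relevance alone. To finish the proof I would construct a counterexample: fix a user $u$ and two items $i,j$ with $P(R_{u,i}=1) < P(R_{u,j}=1)$, so that $s_{u,i} < s_{u,j}$ by the definition $s_{u,i} = \ln P(R_{u,i}=1)$, but let item $i$ be far more exposed, \ie $P(O_{u,i}=1 \mid R_{u,i}=1) \gg P(O_{u,j}=1 \mid R_{u,j}=1)$, chosen so that $P(Y_{u,i}=1) > P(Y_{u,j}=1)$. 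Then the converged model satisfies $\hat{s}_{u,i} > \hat{s}_{u,j}$ while $s_{u,i} < s_{u,j}$, directly violating Definition~\ref{def:unbias}. This is precisely the popularity-bias pathology flagged in the introduction: a less relevant but more heavily exposed item is ranked above a more relevant niche one.

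The step I expect to be the main obstacle is rigorously justifying the convergence claim $\sigma(\hat{s}_{u,i}) = P(Y_{u,i}=1)$, since the loss in Equation~\eqref{bce_mse} is a finite sum over single observations rather than a true expectation, and a shared-embedding model couples predictions across pairs. I would handle this by arguing at the population level, where the expected loss decomposes additively over pairs and is strictly convex in each $\sigma(\hat{s}_{u,i})$, and by appealing to the standard capacity idealization implicit in the phrase ``when $\mathcal{L}$ converges'' used throughout the paper, so that the per-pair optimum is attainable.
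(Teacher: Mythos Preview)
Your proposal is correct and reaches the same conclusion as the paper---that $\hat{s}_{u,i}$ ends up tracking $s_{u,i}+\ln P(O_{u,i}=1\mid R_{u,i}=1)$ rather than $s_{u,i}$---but the route differs. The paper frames BCE as binary classification around a threshold $C$ (Equation~\eqref{point_actual}), then writes the ``maximum-likelihood'' ranking of $\ln P(Y_{u,i}=1)$ around that same threshold and compares the two inequalities to read off what $\hat{s}_{u,i}$ is implicitly modeling. You instead identify the exact population minimizer $\sigma(\hat{s}_{u,i})=P(Y_{u,i}=1)$ via per-pair convexity, use monotonicity of $\sigma$, and then exhibit an explicit counterexample to Definition~\ref{def:unbias}. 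Your argument is tighter and speaks more directly to the definition (you actually produce a pair $(i,j)$ with reversed rankings), and you flag the population-versus-sample gap that the paper glosses over. The paper's threshold-based formulation is less precise, but it buys something structural: the inequality form \eqref{raw_point_ideal} it derives is reused verbatim as the starting point for both the pairwise biasedness proof and the CPR unbiasedness proof, so its presentation is chosen to set up later sections rather than to be maximally rigorous in isolation.
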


\begin{proof}

    The pointwise loss in Equation \eqref{bce_mse} performs binary classification between the relevance scores of positive pairs and negative pairs, which can be represented by the following inequality:
    \begin{equation}\label{point_actual}
        \begin{cases}
            \hat{s}_{u,i} \geq C & \text{if $Y_{u,i}=1$}, \\
            \hat{s}_{u,i} < C    & \text{if $Y_{u,i}=0$}.
        \end{cases}
    \end{equation}
    where $C$ is the constant that denotes the classification threshold.

    To maximize the likelihood of observing $Y_{u,i}$, we should rank $P(Y_{u,i}=1)$ as follows:
    \begin{equation*}
        \begin{cases}
            \ln P(Y_{u,i}=1) \geq C & \text{if $Y_{u,i}=1$}, \\
            \ln P(Y_{u,i}=1) < C    & \text{if $Y_{u,i}=0$}.
        \end{cases}
    \end{equation*}
    which can be rewritten as:
    \begin{equation}\label{raw_point_ideal}
        \begin{cases}
            {s}_{u,i} + \ln P(O_{u,i}=1 \mid R_{u,i}=1) \geq C & \text{if $Y_{u,i}=1$}, \\
            {s}_{u,i} + \ln P(O_{u,i}=1 \mid R_{u,i}=1) < C    & \text{if $Y_{u,i}=0$}.
        \end{cases}
    \end{equation}
    This is the expected ranking of ${s}_{u,i}$.
    Compared with this desired ranking, Equation \eqref{point_actual} actually uses $\hat{s}_{u,i}$ to model ${s}_{u,i} + \ln P(O_{u,i}=1 \mid R_{u,i}=1)$.
    The resultant $\hat{s}_{u,i}$ is generally higher for popular items, since their exposure probability tends to be higher. Therefore, the pointwise loss is biased, and models optimized with it will favor popular items.
\end{proof}

\subsubsection{Biasedness of Pairwise Loss}

\begin{proposition}
    Pairwise loss is biased.
\end{proposition}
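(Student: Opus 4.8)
The plan is to mirror the pointwise argument: I will identify the quantity whose \emph{ranking} $\mathcal{L}_{BPR}$ recovers when it converges, and then show that this quantity equals $\ln P(Y_{u,i}=1) = s_{u,i} + \ln P(O_{u,i}=1 \mid R_{u,i}=1)$ rather than $s_{u,i}$, so that the exposure term once again corrupts the ordering and Definition \ref{def:unbias} fails.

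First I would fix a user $u$ and a pair of items $i,j$, and write the expected contribution of this pair to $\mathcal{L}_{BPR}$ over the random observation process. Since a triple enters $\Set{D}_S$ only when its first item is interacted and its second is not, the ordered triple $(u,i,j)$ appears with expected frequency proportional to $w_1 := P(Y_{u,i}=1)\,P(Y_{u,j}=0)$, while $(u,j,i)$ appears with frequency proportional to $w_2 := P(Y_{u,j}=1)\,P(Y_{u,i}=0)$. Writing $\delta := \hat{s}_{u,i}-\hat{s}_{u,j}$, the expected loss attributable to this pair is, up to a positive constant, $-w_1\ln\sigma(\delta) - w_2\ln\sigma(-\delta)$.

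Next I would minimize this over $\delta$. The map $-\ln\sigma(\cdot)$ is convex, so the nonnegative combination is convex and its unique stationary point is the global minimum. Using $(\ln\sigma)'(\delta)=1-\sigma(\delta)$, the first-order condition reduces to $\sigma(\delta)=w_1/(w_1+w_2)$. Because $\sigma$ is strictly increasing, at convergence $\hat{s}_{u,i}>\hat{s}_{u,j}$ (\ie $\delta>0$) holds exactly when $w_1>w_2$, that is when $\frac{P(Y_{u,i}=1)}{1-P(Y_{u,i}=1)}>\frac{P(Y_{u,j}=1)}{1-P(Y_{u,j}=1)}$; since $x\mapsto x/(1-x)$ is monotonically increasing on $(0,1)$, this is equivalent to $P(Y_{u,i}=1)>P(Y_{u,j}=1)$.

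Finally I would substitute Equation \eqref{basic_prob}: the condition $P(Y_{u,i}=1)>P(Y_{u,j}=1)$ is equivalent to $s_{u,i}+\ln P(O_{u,i}=1\mid R_{u,i}=1) > s_{u,j}+\ln P(O_{u,j}=1\mid R_{u,j}=1)$. Hence $\mathcal{L}_{BPR}$ recovers the ranking of the exposure-inflated score $s_{u,i}+\ln P(O_{u,i}=1\mid R_{u,i}=1)$, not of $s_{u,i}$, so $\hat{s}_{u,i}>\hat{s}_{u,j}$ does not in general coincide with $s_{u,i}>s_{u,j}$, and popular items with higher exposure are favored exactly as in the pointwise case. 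The step I expect to be the main obstacle is justifying the expected-frequency weights $w_1,w_2$ from the observation process rigorously --- in particular, arguing that the two orderings of each pair are weighted by precisely these products --- since the remainder is a short convex minimization followed by the monotonicity of the odds transform.
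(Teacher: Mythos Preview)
Your argument is correct and reaches the same endpoint as the paper --- that the BPR optimum orders items by $s_{u,i}+\ln P(O_{u,i}=1\mid R_{u,i}=1)$ rather than by $s_{u,i}$ --- but the route is genuinely different. The paper does not take an expectation over the observation process nor minimize the loss analytically; instead it reuses the pointwise threshold inequality (Equation~\eqref{raw_point_ideal}), instantiates it once for the positive pair $(u,i)$ and once for the negative pair $(u,j)$, and subtracts the two to obtain the ``desired'' pairwise inequality $s_{u,i}+\ln P(O_{u,i}=1\mid R_{u,i}=1) > s_{u,j}+\ln P(O_{u,j}=1\mid R_{u,j}=1)$; it then observes that the inequality BPR encourages, $\hat s_{u,i}>\hat s_{u,j}$, matches this only if $\hat s$ absorbs the exposure term. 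Your approach is more principled: by computing the expected-frequency weights $w_1,w_2$ and solving the resulting convex problem exactly, you pin down \emph{precisely} which ordering BPR recovers at its optimum, without appealing to a heuristic ``encouraged ranking'' or to the pointwise proof. The cost is the implicit independence assumption $P(Y_{u,i}=1,Y_{u,j}=0)=P(Y_{u,i}=1)P(Y_{u,j}=0)$ needed to justify the weights, which you rightly flag as the main thing to argue; the paper sidesteps this entirely because it never forms an expectation.
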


\begin{proof}
    The pairwise loss in Equation \eqref{bpr} encourages the scores of positive pairs to be higher than those of negative pairs for each user, which can be represented by the following inequality:
    \begin{equation}\label{pair_actual}
        \hat{s}_{u,i} - \hat{s}_{u,j} > 0,\text{ if $Y_{u,i}=1,Y_{u,j}=0$}.
    \end{equation}

    According to the expected ranking \eqref{raw_point_ideal}, we have:
    \begin{equation*}
        \begin{split}
            &{s}_{u,i} + \ln P(O_{u,i}=1 \mid R_{u,i}=1) \geq C, \text{ if $Y_{u,i}=1$};\\
            &{s}_{u,j} + \ln P(O_{u,j}=1 \mid R_{u,j}=1) < C, \text{ if $Y_{u,j}=0$}.
        \end{split}
    \end{equation*}
    By a subtraction between these two inequalities, we can write this ranking in a pairwise form:
    \begin{multline}\label{raw_pair_ideal}
        {s}_{u,i}+\ln P(O_{u,i}=1 \mid R_{u,i}=1) - \left[{s}_{u,j}+\ln P(O_{u,j}=1 \mid R_{u,j}=1)\right]\\ > 0,
        \text{ if $Y_{u,i}=1,\ Y_{u,j}=0$}.
    \end{multline}
    Compared with this desired ranking, Equation \eqref{pair_actual} also uses $\hat{s}_{u,i}$ to model ${s}_{u,i} + \ln P(O_{u,i}=1 \mid R_{u,i}=1)$, which is generally higher for popular items, thus the pairwise loss is biased.
\end{proof}
\section{Method}

In this section, we propose a new loss named CPR, and theoretically prove its unbiasedness under a reasonable assumption. Then we extend this loss to a more general form for a stronger supervision on learning. Finally, we design a dynamic sampling algorithm for CPR to speed up the training and also improve its performance.

\subsection{CPR Loss}

Below we present the proposed CPR loss and discuss its rationale in the following subsections. We first construct a training sample by selecting two positive user-item pairs $(u_1, i_1)$ and $(u_2, i_2)$, so that $u_1\ne u_2$, meanwhile $(u_1, i_2)$ and $(u_2, i_1)$ are negative pairs.
On such training samples, we propose a novel loss as follows:
\begin{equation}
    \begin{split}
        \mathcal{L}_{CPR}=-\sum_{(u_1, u_2, i_1, i_2) \in \Set{D}_2} \ln \sigma \left[\frac{1}{2}(\hat{s}_{u_1,i_1}+\hat{s}_{u_2,i_2}-\hat{s}_{u_1,i_2}-\hat{s}_{u_2,i_1})\right]
    \end{split}
\end{equation}
where $\Set{D}_2=\{(u_1,u_2,i_1,i_2) \mid Y_{u_1, i_1}=1,Y_{u_2, i_2}=1,Y_{u_1, i_2}=0,Y_{u_2, i_1}=0\}$ denotes the training data.

\subsection{Unbiasedness of CPR Loss}

CPR loss encourages the sum score of the two positive pairs to be higher than that of the two negative pairs, which are composed by crossing the two pairs, \ie
\begin{multline}\label{cpr_2}
    \hat{s}_{u_1,i_1}+\hat{s}_{u_2,i_2}-\hat{s}_{u_1,i_2}-\hat{s}_{u_2,i_1}>0,\\
    \text{if $Y_{u_1,i_1}=1,Y_{u_2,i_2}=1,Y_{u_1,i_2}=0,Y_{u_2,i_1}=0$}.
\end{multline}
To demonstrate its unbiasedness, we first make a mild assumption: the exposure probability can be factorized into user propensity, item propensity, and user-item relevance, which is formulated as:
\begin{equation}\label{o_factorized}
    P(O_{u,i}=1 \mid R_{u,i}=1)=p_u \cdot p_i \cdot P(R_{u,i}=1)^\alpha,
\end{equation}
where $p_u$ and $p_i$ are user-specific and item-specific propensity, respectively, typically higher for active users and popular items;
$P(R_{u,i}=1)^\alpha$ reflects that higher relevance contributes more to the exposure probability;
$\alpha$ is a positive constant for smoothing purpose.
With this assumption, the expected ranking in Equation \eqref{raw_point_ideal} can be rewritten as:
\begin{equation}\label{point_ideal}
    \begin{cases}
        s_{u,i} \geq \frac{1}{1+\alpha}(C - \ln p_u - \ln p_i) & \text{if $Y_{u,i}=1$}, \\
        s_{u,i} < \frac{1}{1+\alpha}(C - \ln p_u - \ln p_i)    & \text{if $Y_{u,i}=0$}.
    \end{cases}
\end{equation}
We can revisit the pointwise ranking in Equation \eqref{point_actual}, which is biased against this correct ranking owing to the impact of $p_u$ and $p_i$.
Similarly, by rewriting the correct pairwise ranking in Equation \eqref{raw_pair_ideal}, we find that the pairwise ranking in Equation \eqref{pair_actual} is biased against this correct ranking due to the impact of $p_i$:
\begin{equation}\label{pair_ideal}
    s_{u,i} - s_{u,j} > \frac{1}{1+\alpha}(- \ln p_i + \ln p_j),\text{ if $Y_{u,i}=1,Y_{u,j}=0$}.
\end{equation}

\begin{proposition}
    CPR loss is unbiased under the assumption of Equation \eqref{o_factorized}.
\end{proposition}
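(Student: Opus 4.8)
The plan is to mirror the two preceding biasedness proofs: I would derive the \emph{ideal} cross-pairwise ranking that the true scores $s$ must obey, and then show that it coincides exactly — same form, same zero threshold — with the ranking that $\mathcal{L}_{CPR}$ enforces on the predicted scores $\hat{s}$ in Equation \eqref{cpr_2}. Whereas the pointwise and pairwise losses leave a residual propensity term in the threshold (Equations \eqref{point_ideal} and \eqref{pair_ideal}), the goal is to exhibit a setting in which, for the cross-pairwise form, the propensities cancel entirely, so that driving $\hat{s}$ toward \eqref{cpr_2} is aligned with recovering the correct ordering of $s$ with no popularity distortion.

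First I would combine the factorization \eqref{o_factorized} with the decomposition \eqref{basic_prob} to obtain $\ln P(Y_{u,i}=1)=\ln p_u+\ln p_i+(1+\alpha)\,s_{u,i}$. I would then write down the four ideal pointwise inequalities of \eqref{point_ideal} for the four pairs defining a sample in $\Set{D}_2$: the two positive pairs $(u_1,i_1)$ and $(u_2,i_2)$ lie above the threshold $C$, while the two crossed negative pairs $(u_1,i_2)$ and $(u_2,i_1)$ lie below it. Next I would add the two positive inequalities, add the two negative inequalities, and subtract. The crucial point is the crossing structure of $\Set{D}_2$: the index $u_1$ appears in one positive and one negative pair, and likewise for $u_2$, $i_1$, $i_2$, so the four-term log-propensity sum $\ln p_{u_1}+\ln p_{u_2}+\ln p_{i_1}+\ln p_{i_2}$ is identical on both sides and cancels, and the thresholds $C$ cancel as well. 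What survives is $(1+\alpha)\bigl(s_{u_1,i_1}+s_{u_2,i_2}-s_{u_1,i_2}-s_{u_2,i_1}\bigr)>0$, and since $1+\alpha>0$ this is exactly $s_{u_1,i_1}+s_{u_2,i_2}-s_{u_1,i_2}-s_{u_2,i_1}>0$.

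This ideal ranking on $s$ is structurally identical to the actual ranking \eqref{cpr_2} that $\mathcal{L}_{CPR}$ drives $\hat{s}$ to satisfy at convergence: both say ``the positive-pair sum exceeds the negative-pair sum'' with a zero threshold and no propensity factor. Hence minimizing the CPR loss introduces no systematic bias from $p_u$ or $p_i$, which is the sense in which it is unbiased under \eqref{o_factorized}.

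I expect the main obstacle to be bridging this cross-pairwise identity to Definition \ref{def:unbias}, which is phrased as a within-user pairwise statement $\hat{s}_{u,i}>\hat{s}_{u,j}\Leftrightarrow s_{u,i}>s_{u,j}$. The delicate point is that the very cancellation which removes the propensities also makes the cross-pairwise combination invariant under additive user- and item-specific offsets, so $\hat{s}$ is pinned down by \eqref{cpr_2} only up to such offsets; one must argue that these offsets do not corrupt the within-user comparison and that agreement of the cross-pairwise inequalities over all of $\Set{D}_2$ suffices to recover the true order. I would therefore close by arguing that at the loss optimum the recovered $\hat{s}$ respects the orderings required by Definition \ref{def:unbias}, thereby completing the proof of unbiasedness.
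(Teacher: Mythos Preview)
Your proposal is correct and follows essentially the same route as the paper: write the four instances of \eqref{point_ideal} for the sample, add the two positive-pair inequalities, subtract the two negative-pair inequalities, observe that each $\ln p_{u_\ell}$ and $\ln p_{i_\ell}$ appears once with each sign and cancels (as does $C$), and conclude that the ideal ranking on $s$ matches the CPR ranking \eqref{cpr_2} on $\hat{s}$. The obstacle you anticipate---that the cross-pairwise constraint only pins $\hat{s}$ down up to additive user/item offsets and hence does not obviously deliver the per-user ordering of Definition~\ref{def:unbias}---is real, but the paper's proof does not attempt to close it either; it simply declares unbiasedness once the inequalities match, and later (Section~\ref{sec:ips}) concedes this is a relaxed notion that ``may not ensure the expected one-to-one comparison in Definition~\ref{def:unbias}.''
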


\begin{proof}
    Given the expected ranking under the assumption, we inspect one training sample involving the positive pairs $(u_1,i_1)$ and $(u_2,i_2)$ and the negative pairs $(u_1,i_2)$ and $(u_2,i_1)$, and have:
    \begin{equation*}
        s_{u_1,i_1} \geq \frac{1}{1+\alpha}(C - \ln p_{u_1} - \ln p_{i_1}),\
        s_{u_2,i_2} \geq \frac{1}{1+\alpha}(C - \ln p_{u_2} - \ln p_{i_2}),
    \end{equation*}
    \begin{equation*}
        s_{u_1,i_2} < \frac{1}{1+\alpha}(C - \ln p_{u_1} - \ln p_{i_2}),\
        s_{u_2,i_1} < \frac{1}{1+\alpha}(C - \ln p_{u_2} - \ln p_{i_1}).
    \end{equation*}
    By a subtraction between the sum of the first two inequalities and the sum of the last two, we get:
    \begin{multline}
        s_{u_1,i_1}+s_{u_2,i_2}-s_{u_1,i_2}-s_{u_2,i_1}>0,\\
        \text{if $Y_{u_1,i_1}=1,\ Y_{u_2,i_2}=1,\ Y_{u_1,i_2}=0,\ Y_{u_2,i_1}=0$}.
    \end{multline}
    $p_u$ and $p_i$ are canceled here, since they have the same total impact on the positive pairs as on the negative pairs. Clearly, CPR ranking (Equation~(\ref{cpr_2})) coincides with this form of expected ranking--- $s_{u,i}$ is successfully modeled by $\hat{s}_{u,i}$. Thus, CPR loss is unbiased.
\end{proof}

\subsection{Extending to More Interactions}\label{sec:extend}
The CPR loss discussed above only uses two observed interactions as a training sample.
We now extend it to $k(k \ge 2)$ interactions, while preserving its unbiasedness:
\begin{multline}\label{cpr_loss}
    \mathcal{L}_{CPR}=-\sum_{k} \sum_{(u_1,\dots,u_k, i_1,\dots,i_k) \in \Set{D}_k}\\
    \ln \sigma
    \left[\frac{1}{k} \cdot (\hat{s}_{u_1,i_1}+\dots+\hat{s}_{u_k,i_k}
        -\hat{s}_{u_1,i_2}-\dots-\hat{s}_{u_k,i_1})\right]
\end{multline}
where $\Set{D}_k=\{(u_1,\dots,u_k,i_1,\dots,i_k) \mid Y_{u_1, i_1}=1,Y_{u_2, i_2}=1,\dots,\allowbreak Y_{u_k, i_k}=1, Y_{u_1, i_2}=0,Y_{u_2, i_3}=0,\dots,Y_{u_k, i_1}=0 \}$, $k \ge 2$, and $\frac{1}{k}$ scales the loss of samples with different numbers of interactions.
We omit the detailed proof of its unbiasedness, which is similar to the previous section.
Figure~\ref{fig:sample} illustrates the composition of samples with different numbers of interactions.

These different types of samples set different restrictions to the ranking, so it is a natural idea to set $k$ as many values in the loss. But in experiments, CPR usually achieves the best performance when $k=2,3$, and the performance cannot be improved when k takes more values greater than 3.
One possible reason is that a larger $k$ makes the training inflexible --- it forces more pairs to be trained together, even when some of them are about to overfit. In the experiments described later, samples are only drawn from $\Set{D}_2$ and $\Set{D}_3$.

The complete loss function used for CPR is:
\begin{equation}
    \mathcal{L}=\mathcal{L}_{CPR}+ \lambda (\|\Mat{\Theta}\|^2),
\end{equation}
where $\Mat{\Theta}$ denotes model parameters, $\lambda$ is the $L_2$ regularization coefficient to control overfitting.

\begin{figure}
    \centering
    \includegraphics[width=0.8\linewidth]{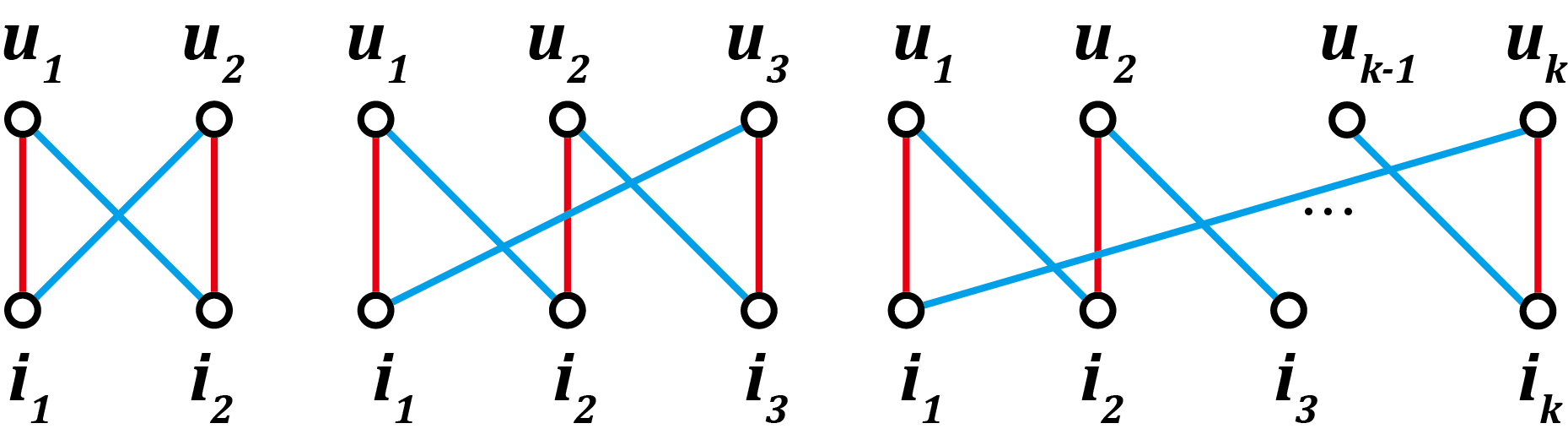}
    \vspace{-10pt}
    \caption{Composition of CPR samples. Red lines denote positive pairs. Blue lines denote negative pairs. Graphs from left to right denote the composition of  samples in $\Set{D}_2$, $\Set{D}_3$ and $\Set{D}_k$, respectively.}
    \label{fig:sample}
    \vspace{-10pt}
\end{figure}

\subsection{Dynamic Sampling}
Obviously, how to construct the sample set $\Set{D}_k$ plays a critical role in CPR.
One straightforward solution is to perform a random sampling: first draw $k$ non-overlap positive user-item pairs, then accept them as one sample if their cross-combinations are all negative pairs, otherwise discard them and redraw.
In this way, all samples are selected with the same probability.
However, during the training, some hard samples \cite{improve_sampling} might take more iterations to converge, while some easy samples have less contribution to the model optimization and should be avoided.

Inspired by DNS~\cite{dns}, which is a dynamic negative sampling strategy for BPR \cite{bpr}, we devise a dynamic sampling strategy for CPR, with the aim of assigning hard samples with higher sampling probability and training them more frequently.
Algorithm~\ref{algo:dys} details how to dynamically select a batch of samples.
We use $b$, $\beta(\beta \ge 1)$ and $\gamma(\gamma>1)$ to denote the batch size, the dynamic sampling rate and the choosing rate, respectively.
First we randomly select $b\beta\gamma$ samples, each of which contains $k$ observed interactions (line \ref{line:1}).
The choosing rate $\gamma$ is to increase the number of initial samples to ensure that after discarding inappropriate samples in the following step, we can still collect the desired number of samples.
Next, we discard the samples whose cross-combinations are not $k$ negative pairs, and obtain $b\beta$ useful samples (line \ref{line:2_start}-\ref{line:2_stop}).
We measure the difficulty of a sample by the value of $\frac{1}{k} \cdot (\hat{s}_{u_1,i_1}+\hat{s}_{u_2,i_2}+\dots+\hat{s}_{u_k,i_k}-\hat{s}_{u_1,i_2}-\hat{s}_{u_2,i_3}-\dots-\hat{s}_{u_k,i_1})$ --- the smaller the value is, the harder it is for this sample to reach its target ranking. Therefore, we select samples with the least values of this expression as a batch of samples (line \ref{line:3_start}-\ref{line:3_stop}).
\vspace{-10pt}

\begin{algorithm}
    \DontPrintSemicolon
    \caption{Dynamic Sampling}
    \label{algo:dys}
    \KwIn{Batch size $b$, dynamic sampling rate $\beta \ge 1$, choosing rate $\gamma > 1$, parameter $k$}
    \KwOut{A batch of samples $\Set{D}_k^{batch}$ for training}
    $\Set{S} \longleftarrow \emptyset$\;
    Randomly select $b \beta \gamma$ samples, each of which contains $k$ positive user-item pairs\;\label{line:1}
    \For{each sample $\{(u_1,i_1),(u_2,i_2),\dots,(u_k,i_k)\}$}{\label{line:2_start}
        \If{$\{(u_1,i_2),(u_2,i_3),\dots,(u_k,i_1)\}$ is a set of negative pairs}{
            $\Set{S} \longleftarrow \Set{S} \cup (u_1,\dots,u_k,i_1,\dots,i_k)$\;
            \lIf{$|\Set{S}|=b\beta$}{
                break
            }
        }
    }\label{line:2_stop}
    Calculate $O_j=\frac{1}{k} \cdot (\hat{s}_{u_1,i_1}+\hat{s}_{u_2,i_2}+\dots+\hat{s}_{u_k,i_k}-\hat{s}_{u_1,i_2}-\hat{s}_{u_2,i_3}-\dots-\hat{s}_{u_k,i_1})$ for $S_j=(u_1,\dots,u_k,i_1,\dots,i_k) \in \Set{S}$\;\label{line:3_start}
    Sort $S_1,\dots,S_{b\beta}$ by the ascending order of $O_1,\dots,O_{b\beta}$\;
    $\Set{D}_k^{batch} \longleftarrow$ the first b samples in the sorted list\;
    \Return{$\Set{D}_k^{batch}$}\label{line:3_stop}
\end{algorithm}
\vspace{-10pt}

\section{Discussion}

We compare CPR with related work to demonstrate the superiority of its novel design. First we compare with a prevalent debiasing method, IPS, and show how CPR avoids the difficulty of setting propensity scores and the high-variance problem. Then we discuss the difference between CPR and Setwise Ranking~\cite{cofiset,setrank,set2setrank}.

\subsection{Comparison with IPS}\label{sec:ips}

IPS and CPR are both debiasing approaches that focus on the loss function, but with different views on the unbiasedness of loss and hence provide different solutions. Here we briefly explain the methodology of IPS and how our CPR avoids or mitigates the inherent problems of IPS.

Let $\mathcal{L}_{ideal}$ be the ideal loss function where the exposure mechanism is properly handled. According to IPS theory, the loss function $\mathcal{L}$ is unbiased if its expectation across the probabilistic distribution of exposure is equal to the ideal loss:
$$\mathbb{E}_{O_{i,j}}(\mathcal{L}(\hat{s}_{u,i}))=\mathcal{L}_{ideal}(\hat{s}_{u,i})$$
With this definition, the authors prove that traditional losses are biased for having an expectation not equal to the ideal loss. To deal with this problem, they reweight the label $Y_{u,i}$ used in the traditional loss functions by dividing it by a \emph{propensity score}, which is an estimate of $P(O_{i,j}=1)$. However, two problems emerge from this solution:
\begin{itemize}
    \item The propensity score is typically calculated by a power-law function of the item degree (the number of its occurrences in the training set), which is an oversimplification of the exposure mechanism.
    \item Although the reweighted loss can have an ideal expectation (if we ignore the inaccuracy of propensity scores for now), its variance can be large (Theorem 4.4 in \cite{relmf}, Theorem 3.4 in \cite{ubpr}, and \cite{ips_norm}), in other words, for each sample, the reweighted loss deviates from the ideal loss. Variance reduction techniques are usually applied, such as clipping the weight or the loss of each sample to a certain range; but in this way, the loss expectation is no longer unbiased.
\end{itemize}

Our CPR alleviates these two problems:
\begin{itemize}
    \item CPR takes a more general assumption on the exposure mechanism, \ie Equation~(\ref{o_factorized}), without explicitly setting any propensity score.
    \item CPR changes the composition of loss instead of reweighting the original loss, and thus achieves sample-level unbiasedness that is well guaranteed under the mild assumption of $P(O_{u,i}=1 \mid R_{u,i}=1)$.
\end{itemize}

CPR loss is \emph{not} perfect --- it ensures the unbiasedness at the cost of pursuing a relaxed optimization goal: encouraging the sum score of $k$ ($k \ge 2$) positive pairs to be higher than that of $k$ negative pairs, which may not ensure the expected one-to-one comparison in Definition \ref{def:unbias}. But this relaxation on optimization can be empirically justified by the good performance of CPR.

\subsection{Comparison with Setwise Ranking}

Setwise Ranking \cite{cofiset,setrank,set2setrank} is another line of work that also make adaptations on BPR. But CPR is quite different from it in terms of both the motivation and the method.
\begin{itemize}
    \item \emph{Motivation.} Setwise Ranking aims to improve the performance of BPR by ranking item sets instead of instances. It is not designed for debiasing purpose.
    \item \emph{Method.} For Setwise Ranking, given a user $u$, two sets are constructed: $\Set{S}_u^+$ is a set of some positive items and $\Set{S}_u^-$ is a set of some negative items. Then the relevance between $\Set{S}_u^+$ and $u$ is encouraged to be closer than that of $\Set{S}_u^-$ and $u$. In other words, they still use triplets $\{(u,\Set{S}_u^+,\Set{S}_u^-)\}$ for pairwise ranking. But CPR uses a more complex form of sample $\{(u_1,\dots,u_k,i_1,\dots,i_k) \mid Y_{u_1, i_1}=1,Y_{u_2, i_2}=1,\dots,\allowbreak Y_{u_k, i_k}=1, Y_{u_1, i_2}=0,Y_{u_2, i_3}=0,\dots,Y_{u_k, i_1}=0 \}$ for cross pairwise ranking, which endows CPR with the debiasing ability that Setwise Ranking methods do not have.
\end{itemize}

\section{Experiments}

We first describe experimental settings, and then evaluate our proposed method as compared to other competitive methods. We also examine the debiasing effect and generalization ability of CPR, and study the effects of different compositions of CPR samples.

\subsection{Experimental Settings}

\subsubsection{Datasets}

\begin{table}
    \caption{Data statistics.}
    \vspace{-8pt}
    \label{tab:datasets}
    \resizebox{0.8\linewidth}{!}{
        \begin{tabular}{ccccc}
            \toprule
            Dataset   & \#Users & \#Items & \#Interactions & Density \\
            \midrule
            MovieLens & 61,770  & 6,958   & 1,533,956      & 0.0036  \\
            Netflix   & 46,420  & 12,898  & 2,475,020      & 0.0041  \\
            iFashion  & 264,251 & 53,033  & 1,505,141      & 0.0001  \\
            \bottomrule
        \end{tabular}}
    \vspace{-15pt}
\end{table}

We conduct empirical studies on three real-world datasets: MovieLens-10M~\cite{movielens}, Netflix Prize~\cite{netflix} and Alibaba iFashion~\cite{ifashion}. Ratings in MovieLens and Netflix are binarized by setting five-star ratings to 1 and the rest to 0. For iFashion, the user clicks on fashion outfits are regarded as interactions.
For all datasets, We adopt 3-core setting to keep more unpopular items instead of the commonly used 10-core setting that discards users and items with less than 10 interactions~\cite{bpr}, since the debiasing performance largely depends the performance on unpopular items.
The statistics of the processed datasets are summarized in Table \ref{tab:datasets}.

\subsubsection{Evaluation Protocol}

To evaluate the proposed debiasing approach, we follow the offline evaluation protocol in previous work~\cite{causal_inference,caus_e,dice} to create simulated unbiased data. More specifically, we sample the records from the full dataset with equal probability in terms of items to create the validation and testing sets, leaving the rest as training data. Moreover, we cap the sampling probability with an upper limit, otherwise, the probability can be large for some unpopular items, making these items over sampled into the validation/testing set, while leaving few in the training set.
More formally, let $p_s(u,i)$ be the probability of sampling the user-item pair $(u,i)$, we set $p_s(u,i) \propto \min (\frac{1}{d_i},a)$, where $d_i$ is the number of occurrences of $i$ in the full dataset, and the upper limit $a$ is set to $\frac{1}{60}$ for MovieLens and Netflix, and set to $\frac{1}{12}$ for the more sparse dataset iFashion.
The sampled data is randomly split into a validation set and a test set, and the rest data is treated as the training set. We obtain a 70/10/20 split for the training/validation/testing sets.

We measure the performance by three evaluation metrics: Recall@K, NDCG@K and ARP@K, where K is set to 20 by default.
Recall@K and NDCG@K are widely used to measure the quality of the recommendations; since here we adopt an unbiased evaluation protocol, these metrics are calculated on the simulated unbiased sets where popular items are downsampled, emphasizing the importance of unpopular items, so higher values of these metrics reflect both higher quality and better unbiasedness of the recommendation.
Average Recommendation Popularity at K (ARP@K) introduced by \cite{managing,challenging} is an additional measure for (un)biasedness. It calculates the average popularity of the top-K recommended items for each user, where the item popularity is represented by its degree. Lower values of ARP@K reflect better unbiasedness of the recommendation.

\subsubsection{Baselines}

We refer to the approaches that do not explicitly handle popularity bias as traditional approaches, and those that handle the bias as debiasing approaches. We compare the proposed CPR with two traditional approaches (the first two below) and four debiasing approaches (the rest four):
\begin{itemize}
    \item BPR~\cite{bpr}: BPR is a commonly used training method, which optimizes the personalized ranking by minimizing the pairwise ranking loss.
    \item Mult-VAE~\cite{multvae}: This is a recommendation model based on the variational autoencoder (VAE), which assumes the data is generated from a multinomial distribution and uses variational inference for parameter estimation.
    \item CausE~\cite{caus_e}: This is a domain adaptation method, which assigns two embeddings to each item --- one is learned on a large amount of biased data and the other is learned on a small amount of unbiased data. It further regularizes them to be similar.
    \item Rel-MF \cite{relmf}: It is an IPS method that modifies the traditional pointwise loss.
    \item UBPR \cite{ubpr}: It is an IPS method that modifies the traditional pairwise loss. It uses a non-negative loss function to reduce the variance of the estimator.
    \item DICE~\cite{dice}: It is the state-of-the-art debiasing method. For each user and item, it learns two disentangled embeddings for two causes, user interest and conformity, by sampling cause-specific data for training.
\end{itemize}

Except for Mult-VAE, which must use the VAE architecture and loss, other approaches, as well as our CPR can be applied across different backbones. For a fair comparison, we adopt the simple and widely used model, Matrix Factorization (MF) \cite{mf}, as the backbone of these approaches to conduct all the experiments expect for those in Section~\ref{sec:backbone}, where the aim is to compare the approaches under different backbones including LightGCN \cite{lightgcn}, NeuMF \cite{neumf}, and NGCF \cite{ngcf}.

\subsubsection{hyperparameter Settings}
We implement CPR and baselines in TensorFlow. The embedding size is 128 for all methods except for DICE. To keep the number of embedding parameters consistent among all the methods, we set the embedding size to 64 for DICE since it learns two embeddings for each user and item.
All the models are trained with Adam optimizer via early stopping.
For CPR, samples are drawn from both $\Set{D}_2$ and $\Set{D}_3$ to compute the loss (set $k=2,3$ in Equation~(\ref{cpr_loss})) and the sampling ratio of $\Set{D}_2$ and $\Set{D}_3$ is tuned on the validation set; we will empirically justify this sample composition and provide hyperparameter study on the sampling ratio in Section~\ref{sec:composition}. The dynamic sampling rate in CPR is also tuned on the validation set; the choosing rate is fixed to 2.

\subsubsection{Significance Test}

To detect significant differences between CPR and the best baseline on each dataset, we repeat their experiments 5 times by changing the random seed. The two-tailed pairwise t-test is performed using their results.

\subsection{Overall Comparison}\label{sec:overall}

Experimental results demonstrate the superiority of CPR in terms of both effectiveness and efficiency.

\subsubsection{Effectiveness Comparison}

The effectiveness comparison is presented in Table \ref{tab:mf}. Our methods are CPR-rand and CPR. CPR-rand adopts random sampling while CPR performs dynamic sampling.
CPR-rand beats all the baselines on MovieLens and Netflix. On iFashion, CPR-rand achieves similar Recall with the best baseline, but with a lower ARP which indicates that its recommendation is less biased. CPR further improves the performance of CPR-rand on all datasets, significantly outperforming all baselines on Recall and NDCG, with $p$-values less than 0.05, which indicates that the improvement achieved by CPR over the best baseline is statistically significant.
Note that the Recall and NDCG are measured on the simulated unbiased test sets where popular items are downsampled, it is normal to find their values relatively low compared to those in the papers where the test sets are randomly sampled~\cite{multvae}, because models generally perform worse on unpopular items.

The higher Recall and NDCG of CPR compared to other approaches on unbiased test set indicates its superior prediction accuracy, especially on unpopular items.
The lower ARP of CPR further indicates that CPR tends to recommend more unpopular items, in other words, it can better mitigate the popularity bias.

\begin{table*}
    \caption{Overall Performance. The best result in each column is in bold; the best baseline result in each column is underlined. Significance test is conducted on Recall and NDCG between CPR and the best baseline. $\star: p\text{-value}<0.05$. \%Improv.: percentage of improvement on Recall and NDCG of CPR over the best baseline. Higher Recall and NDCG means higher prediction accuracy; lower ARP means less popularity bias.}
    \label{tab:mf}
    \vspace{-10pt}
    \begin{tabular}{lllllllllll}
        \toprule
        \multirow{2}{*}{Method} & \multirow{2}{*}{Backbone}                & \multicolumn{3}{c}{MovieLens}            & \multicolumn{3}{c}{Netflix}              & \multicolumn{3}{c}{iFashion}                                                                                                                                                                                         \\
        \cmidrule(lr){3-5} \cmidrule(lr){6-8} \cmidrule(lr){9-11}
                                &                                          & Recall                                   & NDCG                                     & ARP                          & Recall                                   & NDCG                                     & ARP              & Recall                                   & NDCG   & ARP                      \\
        \midrule
        BPR                     & MF                                       & 0.1579                                   & 0.0939                                   & 4084                         & 0.1255                                   & 0.0952                                   & 3341             & 0.0278                                   & 0.0122 & 436                      \\
        Mult-VAE                & -                                        & 0.1676                                   & 0.1004                                   & 4468                         & 0.1242                                   & 0.0921                                   & 3554             & \underline{0.0309}
                                & \underline{0.0142}
                                & 415
        \\
        \cmidrule{1-11}
        \multicolumn{11}{l}{\emph{Debiasing approaches:}}                                                                                                                                                                                                                                                                                                                               \\
        \cmidrule{1-11}
        CausE                   & \multirow{4}{*}{MF}                      & 0.1440                                   & 0.0805                                   & 3814                         & 0.1080                                   & 0.0730                                   & 3206             & 0.0214                                   & 0.0094 & 408                      \\
        Rel-MF                  &                                          & 0.1463                                   & 0.0829                                   & 4107                         & 0.1138                                   & 0.0765                                   & 3254             & 0.0250                                   & 0.0113 & 421                      \\
        UBPR                    &                                          & 0.1682                                   & 0.1000                                   & 2691                         & 0.1315                                   & 0.1007                                   & 2289             & 0.0281                                   & 0.0126 & 404                      \\
        DICE                    &                                          & \underline{0.1835}                       & \underline{0.1101}                       & \underline{1173}             & \underline{0.1317}                       & \underline{0.1007}                       & \underline{1244} & 0.0275                                   & 0.0124 & \textbf{\underline{250}} \\
        \cmidrule{1-11}
        CPR-rand                & \multirow{2}{*}{MF}                      & 0.1938                                   & 0.1192                                   & \textbf{1055}                & 0.1462                                   & 0.1143                                   & 1270             & 0.0307
                                & 0.0137
                                & 398                                                                                                                                                                                                                                                                                                                                                   \\
        CPR                     &                                          & \textbf{0.2003}\textsuperscript{$\star$} & \textbf{0.1223}\textsuperscript{$\star$} & 1138                         & \textbf{0.1511}\textsuperscript{$\star$} & \textbf{0.1190}\textsuperscript{$\star$} & \textbf{1204}    & \textbf{0.0332}\textsuperscript{$\star$}
                                & \textbf{0.0151}\textsuperscript{$\star$}
                                & 359
        \\
        \cmidrule{1-11}
        \%Improv.               &                                          & 9.16\%                                   & 11.08\%                                  & -                            & 14.73\%                                  & 18.17\%                                  & -                & 7.44\%
                                & 6.34\%                                   & -
        \\
        \bottomrule
    \end{tabular}
    \vspace{-5pt}
\end{table*}

\subsubsection{Efficiency Comparison}

Figure \ref{fig:curves} shows the Recall curves of all approaches on MovieLens and Netflix, where it can be observed that CPR converges to the best performance with the least number of epochs, contributing by the dynamic sampling. Although dynamic sampling processes more samples to select the relatively difficult ones, it can be implemented efficiently with negligible cost.
Figure \ref{fig:time} compares the total training time of all approaches; the y-axis is log-scaled. The strongest baseline DICE requires the longest training time, while our CPR is relatively fast among all approaches. More specifically, CPR takes only 2.6\% and 1.3\% of the training time of DICE on MovieLens and Netflix, respectively.

\begin{figure}
    \vspace{-5pt}
    \begin{subfigure}{.5\linewidth}
        \includegraphics[width=\linewidth]{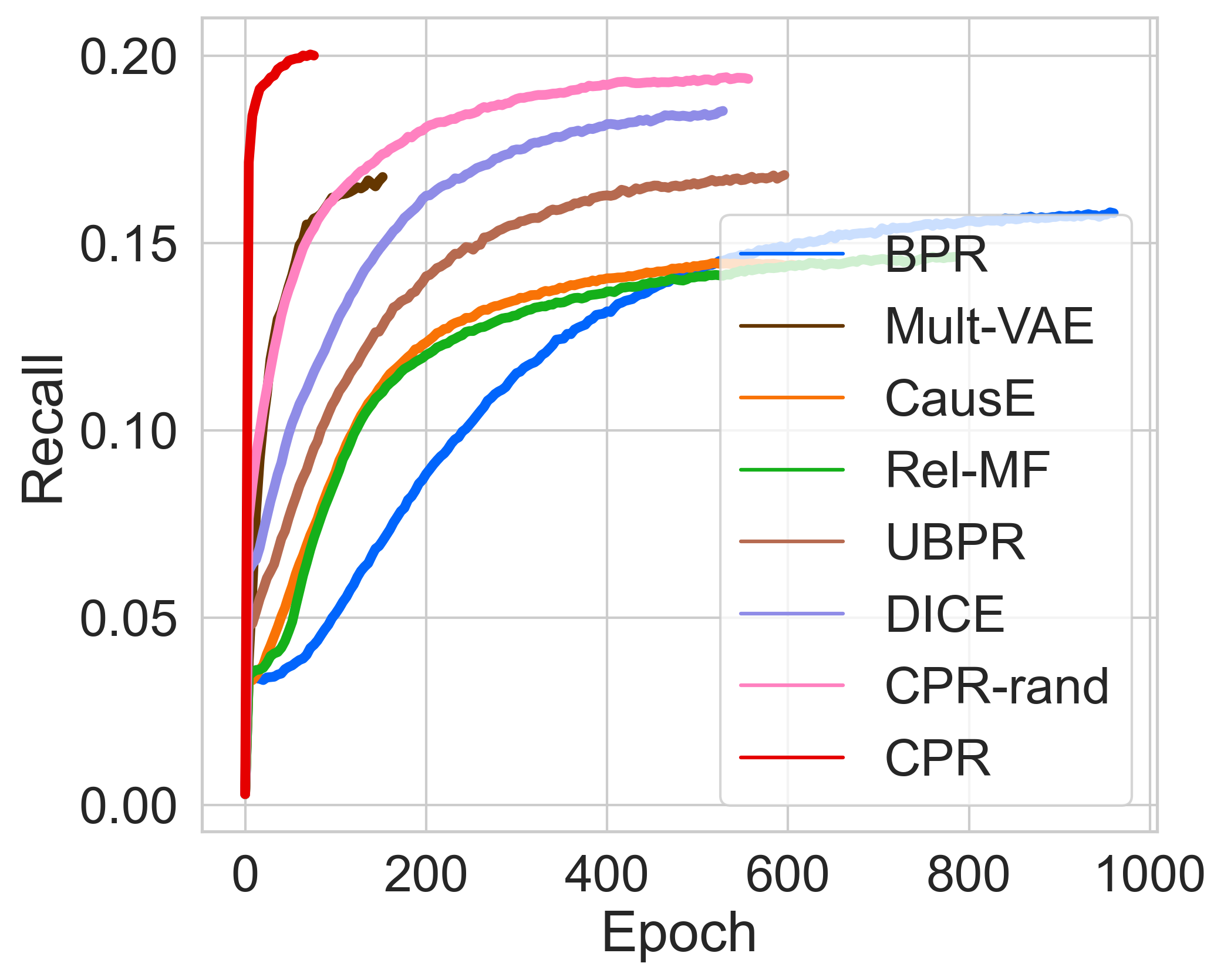}
        \caption{MovieLens}
        \vspace{-8pt}
    \end{subfigure}%
    \begin{subfigure}{.5\linewidth}
        \includegraphics[width=\linewidth]{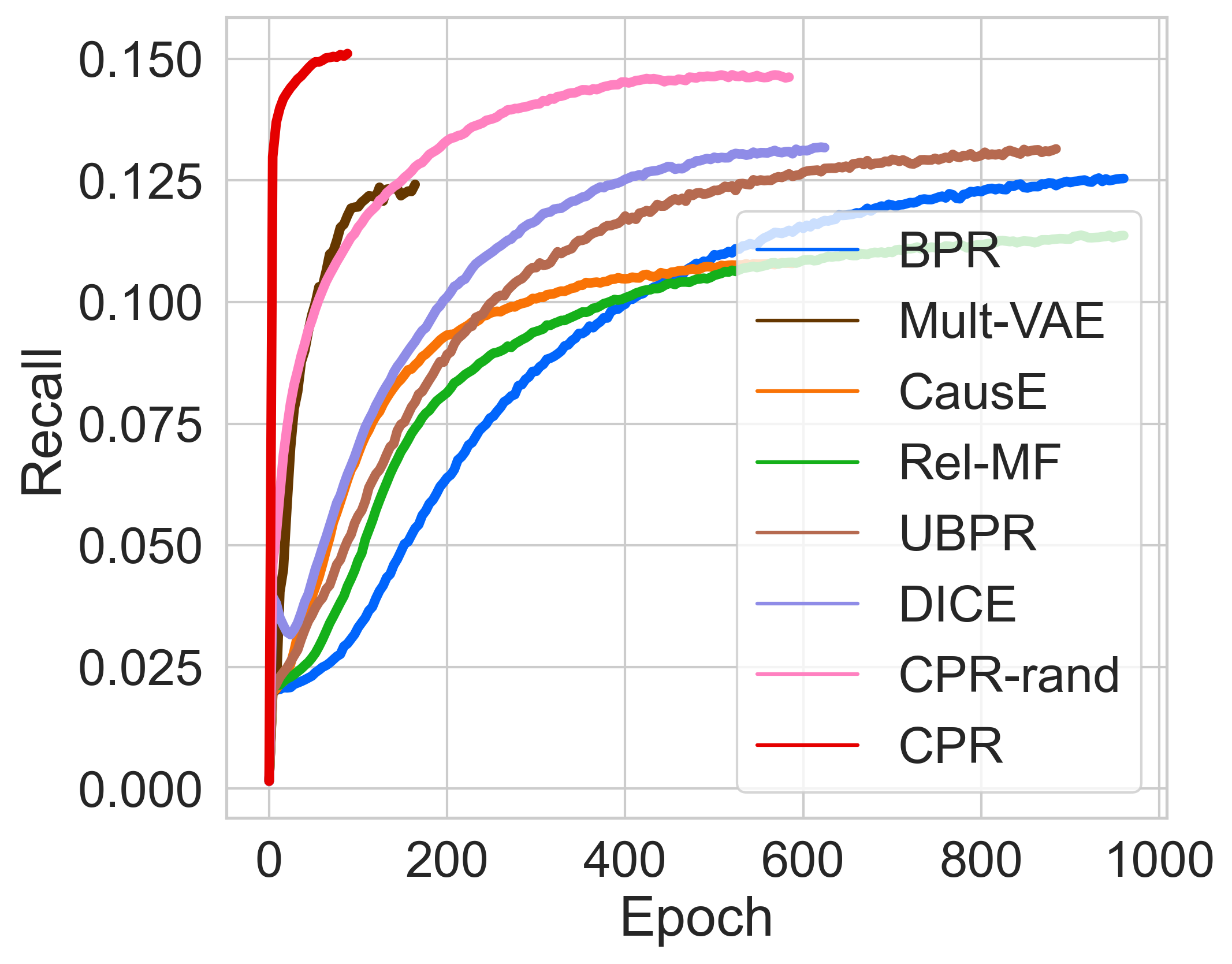}
        \caption{Netflix}
        \vspace{-8pt}
    \end{subfigure}
    \caption{Comparison of Recall curves.}
    \label{fig:curves}
    \vspace{-10pt}
\end{figure}

\begin{figure}
    \centering
    \begin{subfigure}{.45\linewidth}
        \includegraphics[width=\linewidth]{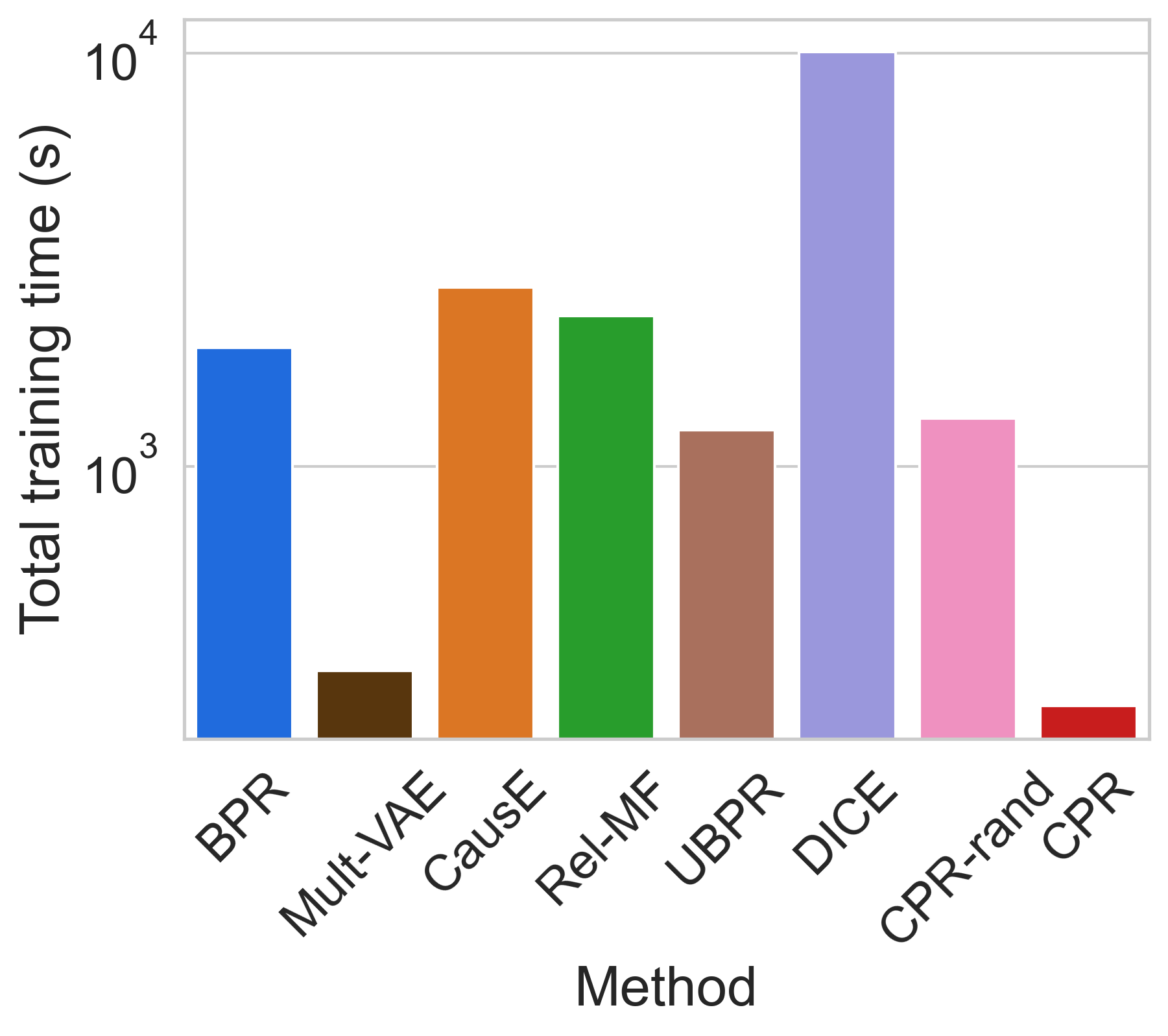}
        \caption{MovieLens}
        \vspace{-10pt}
    \end{subfigure}%
    \begin{subfigure}{.45\linewidth}
        \includegraphics[width=\linewidth]{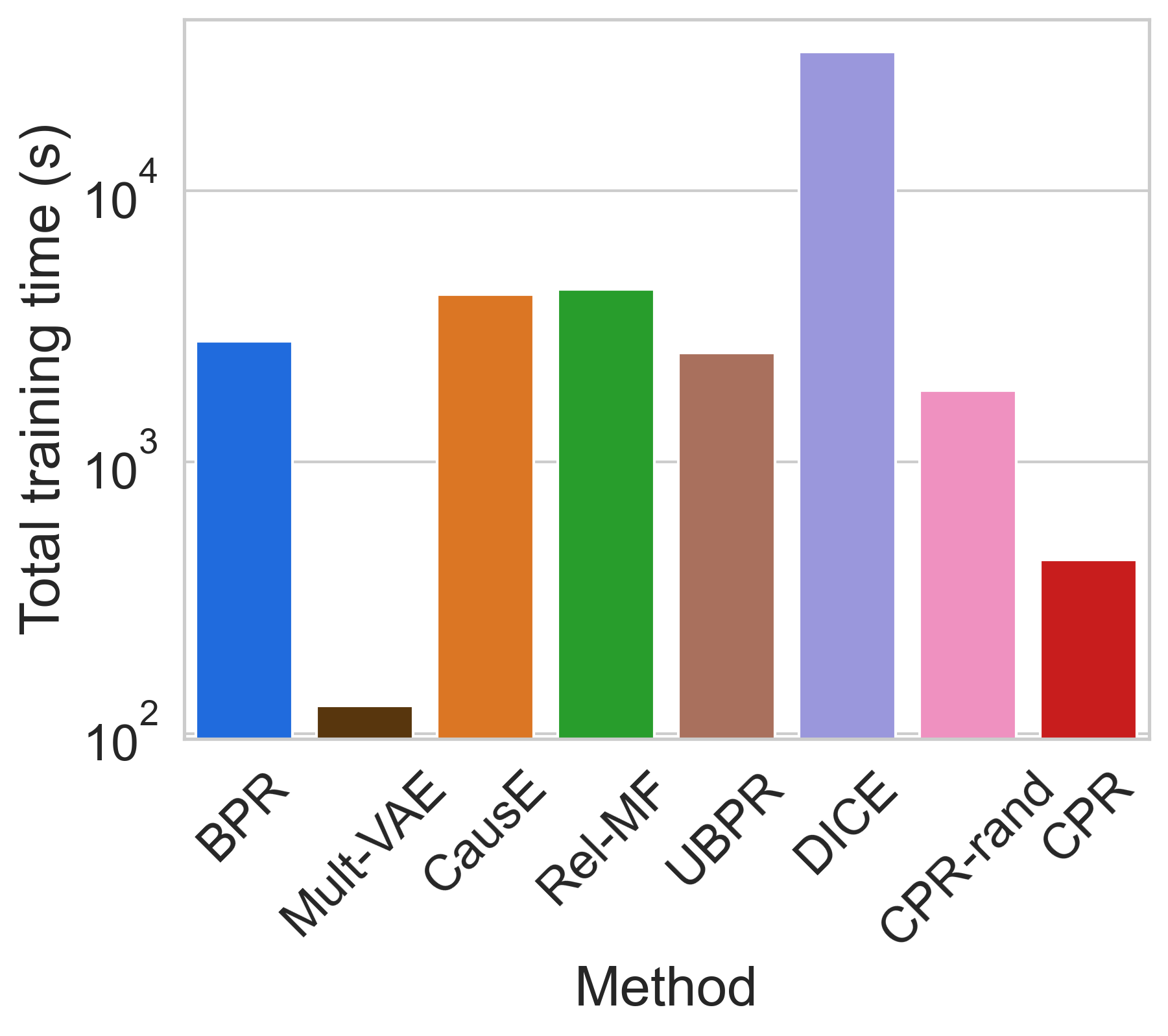}
        \caption{Netflix}
        \vspace{-10pt}
    \end{subfigure}
    \caption{Comparison of the total training time.}
    \label{fig:time}
    \vspace{-10pt}
\end{figure}

\subsection{Distribution of Recommendation}\label{sec:distribution}

In order to examine the debiasing effect of CPR more intuitively, we compare the percentage of recommended items grouped by their degrees for each approach, along with their percentage in the training and test sets.

We categorize all items into four groups in the following way: sorting all items by their degrees in the ascending order, and grouping them such that the sum of degrees in each group are approximately equal. Figure \ref{fig:rec} shows the percentage of degrees contained in each group of the training set and the test set, and the percentage of the recommended items of each group by each approach. The curve of the training set is flat due to our grouping method. While for the testing set, the percentage of recommended items decreases as the group ID increases since popular items are downsampled, as a result, low-degree items take a larger proportion. For an ideal unbiased recommendation, its distribution is expected to match that of the test set.

BPR and Mult-VAE both amplify the bias in the training set, recommending more popular items from Group 2 and 3. Debiasing approaches like UBPR, DICE and CPR alleviate the bias by recommending more unpopular items from Group 0 and 1. CPR and DICE achieve the best debiasing result, making the distribution closer to that of the test set.

\begin{figure}
    \begin{subfigure}{.5\linewidth}
        \centering
        \includegraphics[width=\linewidth]{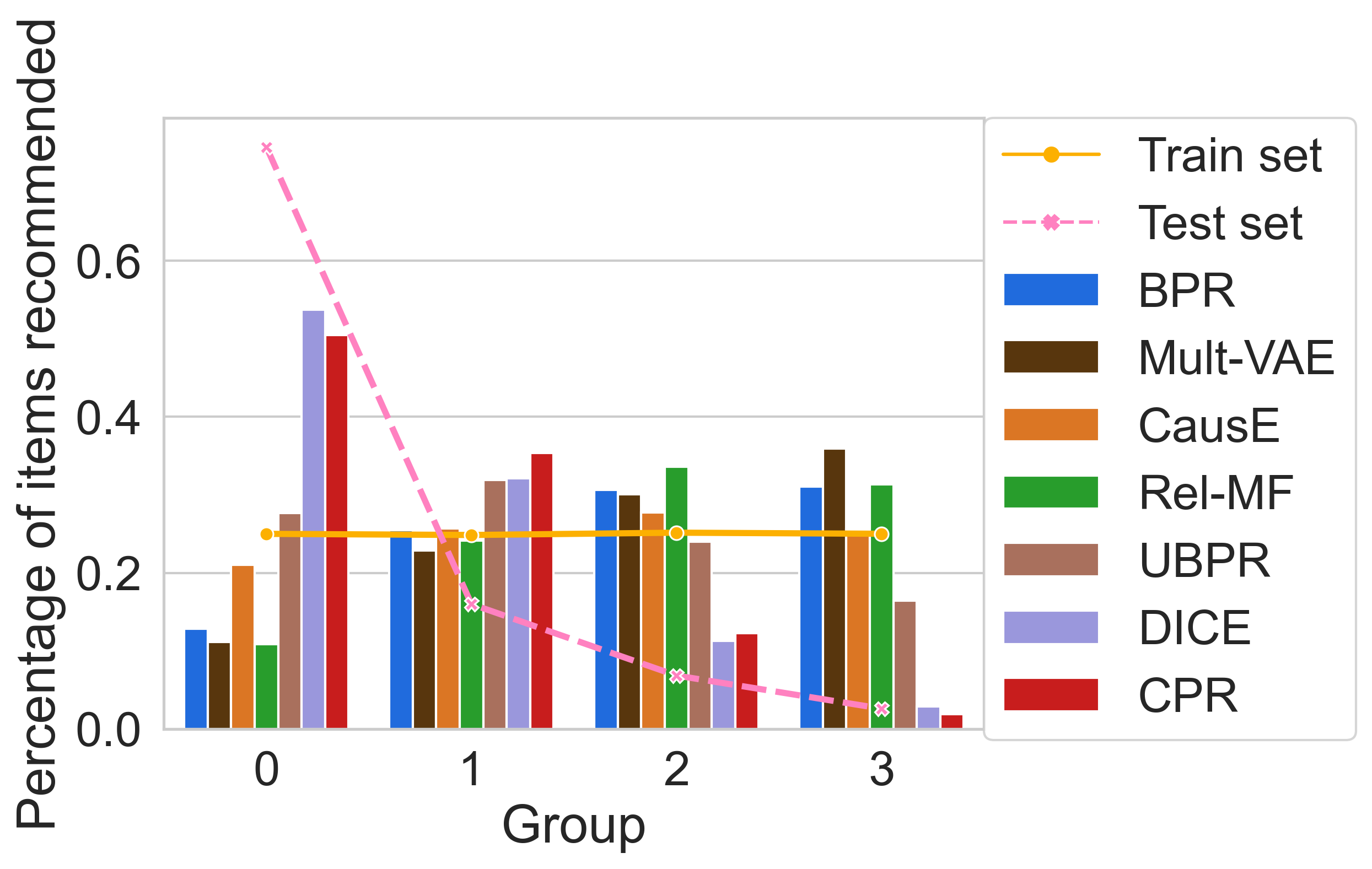}
        \vspace{-10pt}
        \caption{MovieLens}
    \end{subfigure}%
    \begin{subfigure}{.5\linewidth}
        \centering
        \includegraphics[width=\linewidth]{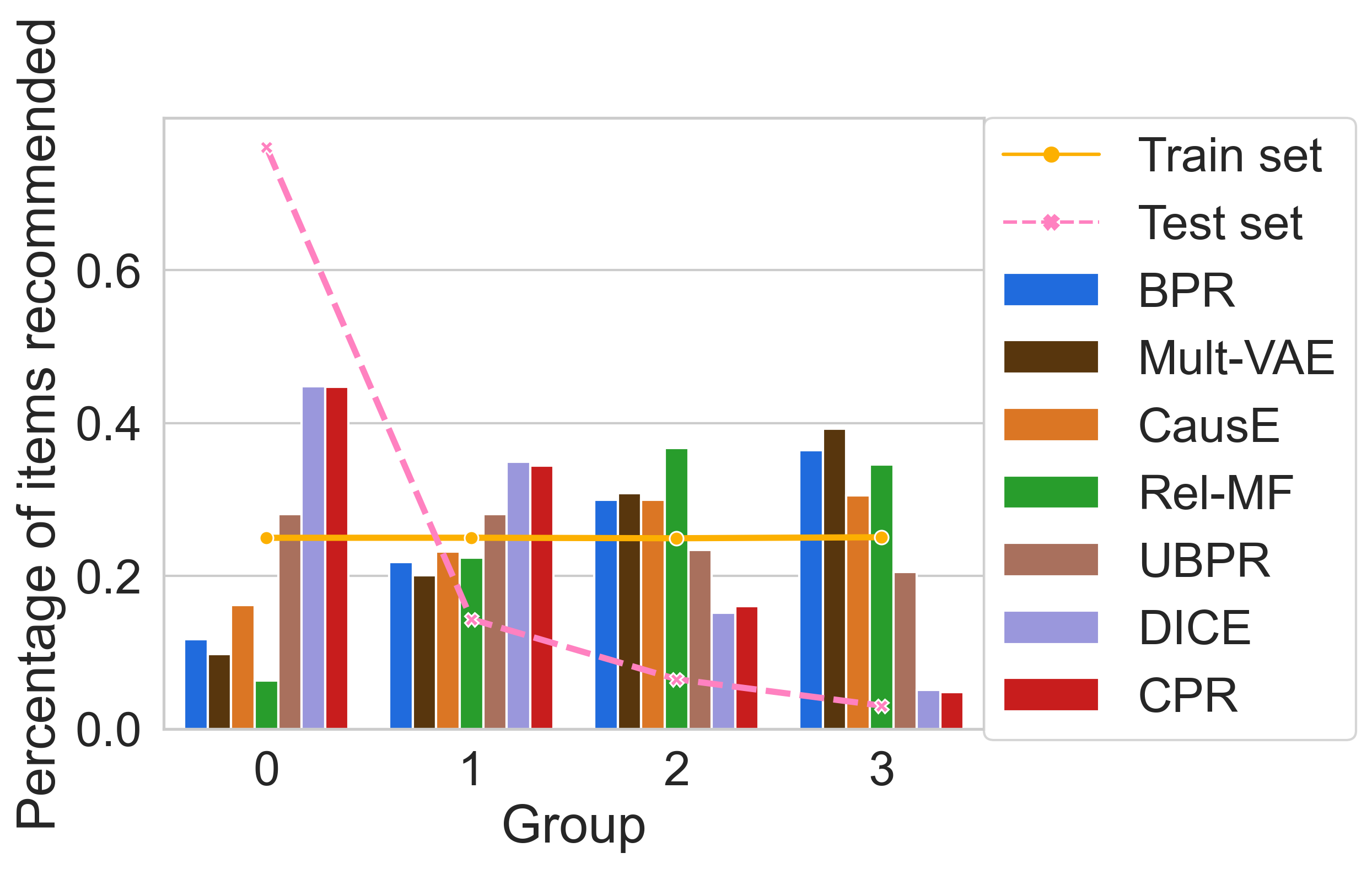}
        \vspace{-10pt}
        \caption{Netflix}
    \end{subfigure}
    \caption{Comparison of percentage of recommended items of each group.}
    \label{fig:rec}
    \vspace{-10pt}
\end{figure}

\subsection{Generalization Ability}\label{sec:generalization}

\subsubsection{Different Backbones}\label{sec:backbone}

We apply our method CPR, as well as BPR and the best baseline DICE on MovieLens and Netflix, to different backbones to evaluate their generalization ability. We adopt the state-of-the-art graph model LightGCN \cite{lightgcn}, a neural model NeuMF \cite{neumf}, and another graph model NGCF \cite{ngcf} as different backbones. Results in Table \ref{tab:backbones} verify that CPR consistently outperforms BPR and DICE on these backbones. For example, on MovieLens, CPR performs better than DICE by 7.9\%, 26.4\% and 10.4\% \wrt NDCG@20 on LightGCN, NeuMF and NGCF, respectively.

\begin{table}
    \caption{Performance \wrt different backbones.}
    \vspace{-8pt}
    \label{tab:backbones}
    \resizebox{0.9\linewidth}{!}{
        \begin{tabular}{llllll}
            \toprule
            \multirow{2}{*}{Method} & \multirow{2}{*}{Backbone} & \multicolumn{2}{c}{MovieLens} & \multicolumn{2}{c}{Netflix}                                     \\
            \cmidrule(lr){3-4}\cmidrule(lr){5-6}
                                    &                           & Recall                        & NDCG                        & Recall          & NDCG            \\
            \midrule
            BPR                     & \multirow{3}{*}{LightGCN} & 0.1661                        & 0.0995                      & 0.1310          & 0.0985          \\
            DICE                    &                           & 0.1864                        & 0.1121                      & 0.1355          & 0.1043          \\
            CPR                     &                           & \textbf{0.1963}               & \textbf{0.1210}             & \textbf{0.1502} & \textbf{0.1180} \\
            \midrule
            BPR                     & \multirow{3}{*}{NeuMF}    & 0.1519                        & 0.0890                      & 0.1254          & 0.0932          \\
            DICE                    &                           & 0.1568                        & 0.0914                      & 0.1126          & 0.0841          \\
            CPR                     &                           & \textbf{0.1895}               & \textbf{0.1155}             & \textbf{0.1434} & \textbf{0.1109} \\
            \midrule
            BPR                     & \multirow{3}{*}{NGCF}     & 0.1615                        & 0.0951                      & 0.1199          & 0.0899          \\
            DICE                    &                           & 0.1774                        & 0.1071                      & 0.1312          & 0.0999          \\
            CPR                     &                           & \textbf{0.1952}               & \textbf{0.1182}             & \textbf{0.1498} & \textbf{0.1173} \\
            \bottomrule
        \end{tabular}}
    \vspace{-10pt}
\end{table}

\subsubsection{Different Degrees of Bias}

We also evaluate the generalization ability of CPR using training data with different degrees of bias. We sample 70\% of the training set with probability $p_s(u,i) \propto d_i^\theta$, where $d_i$ is the item degree in the training set, and $\theta$ is set to 0.5 and -0.5, respectively. When $\theta=0.5$, the bias in the sampled training set is amplified compared with the original training set; when $\theta=-0.5$, the bias is reduced. Table~\ref{tab:train} shows that CPR consistently outperforms the best baselines, DICE, with around 10\% and 14\% improvement \wrt NDCG@20 on MovieLens and Netflix, respectively with both sampled training sets ($\theta=0.5$ and $\theta=-0.5$).

\begin{table}
    \caption{Performance \wrt different degrees of bias.}
    \vspace{-8pt}
    \label{tab:train}
    \resizebox{0.8\linewidth}{!}{
        \begin{tabular}{llllll}
            \toprule
            \multirow{2}{*}{$\theta$} & \multirow{2}{*}{Method} & \multicolumn{2}{c}{MovieLens} & \multicolumn{2}{c}{Netflix}                                     \\
            \cmidrule(r){3-4}\cmidrule(r){5-6}
                                      &                         & Recall                        & NDCG                        & Recall          & NDCG            \\
            \midrule
            \multirow{3}{*}{0.5}      & BPR                     & 0.1217                        & 0.0679                      & 0.0939          & 0.0654          \\
                                      & DICE                    & 0.1461                        & 0.0839                      & 0.1048          & 0.0741          \\
                                      & CPR                     & \textbf{0.1584}               & \textbf{0.0927}             & \textbf{0.1189} & \textbf{0.0858} \\
            \midrule
            \multirow{3}{*}{-0.5}     & BPR                     & 0.1413                        & 0.0809                      & 0.1086          & 0.0777          \\
                                      & DICE                    & 0.1597                        & 0.0956                      & 0.1133          & 0.0858          \\
                                      & CPR                     & \textbf{0.1737}               & \textbf{0.1040}             & \textbf{0.1274} & \textbf{0.0965} \\
            \bottomrule
        \end{tabular}}
\end{table}

\subsection{Composition of Samples}\label{sec:composition}

We compare the performances of CPR with different sample compositions, \ie different $k$ values in Equation~(\ref{cpr_loss}). We first fix $k$ to single values: $k=1$, $k=2$, or $k=3$ and tune other hyperparameters to obtain the best performances under each value of $k$. The dashed lines in Figure~\ref{fig:composition} show that the value of Recall decreases as $k$ increases, but not much --- the performance under $k=4$ is still better than all baselines in Table~\ref{tab:mf}. As we discussed in Section~\ref{sec:extend}, the performance drop is probably due to the inflexibility of training that comes with the larger $k$. Next, we set $k=2,3$. We choose the sampling ratio of $\Set{D}_2$ and $\Set{D}_3$ from $\left[1,2,3,4,5,6\right]$, and tune other hyperparameters under each value of the sampling ratio. Results are drawn with solid lines in Figure~\ref{fig:composition}. The performance under $k=2,3$ can be better than that under $k=2$ with the appropriate sampling ratio which is typically around 3, indicating that different types of samples can be used in combination to achieve better results compared to using a single type of samples.

\begin{figure}
    \begin{subfigure}{.5\linewidth}
        \includegraphics[width=\linewidth]{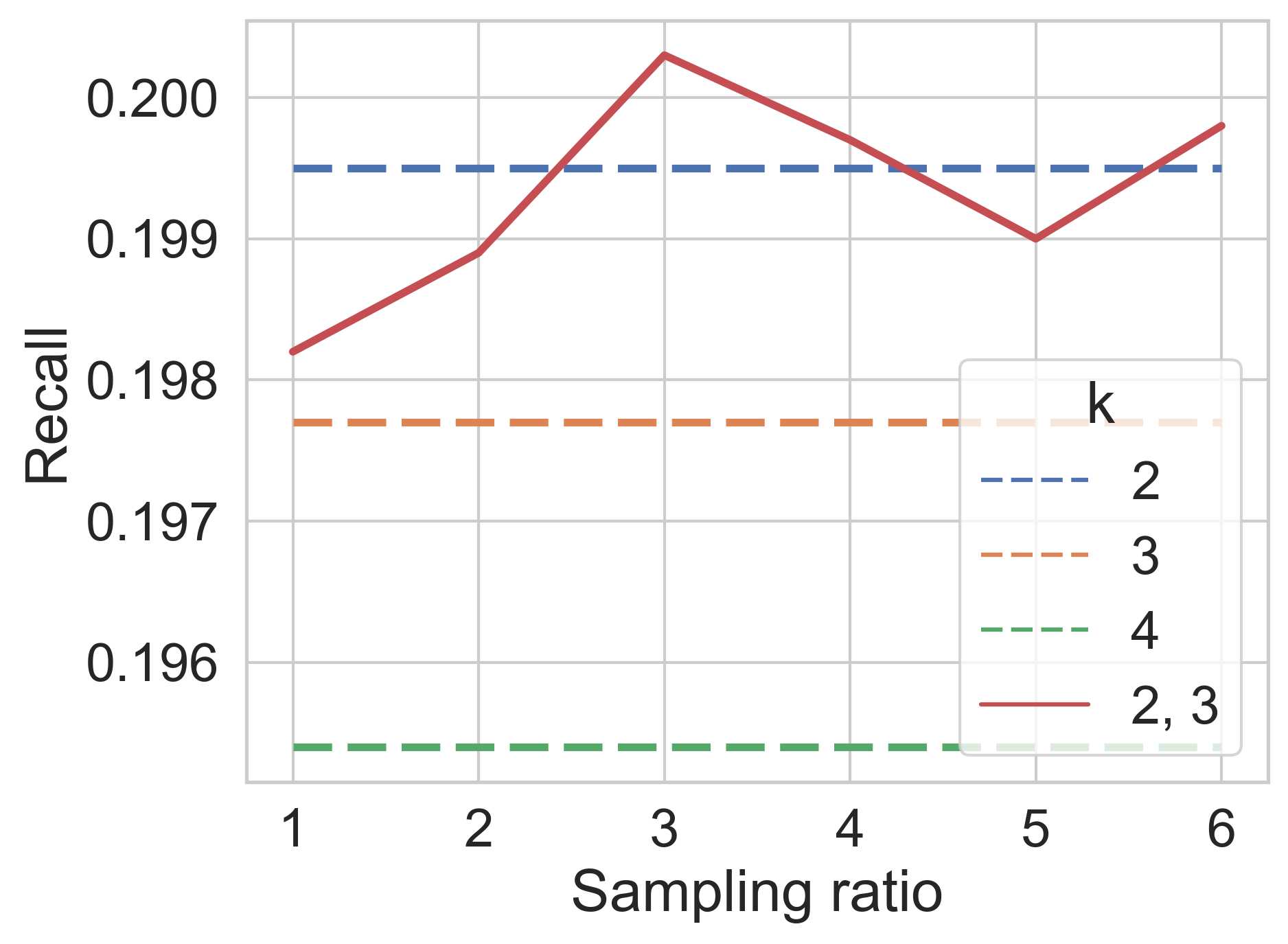}
        \caption{MovieLens}
        \vspace{-8pt}
    \end{subfigure}%
    \begin{subfigure}{.5\linewidth}
        \includegraphics[width=\linewidth]{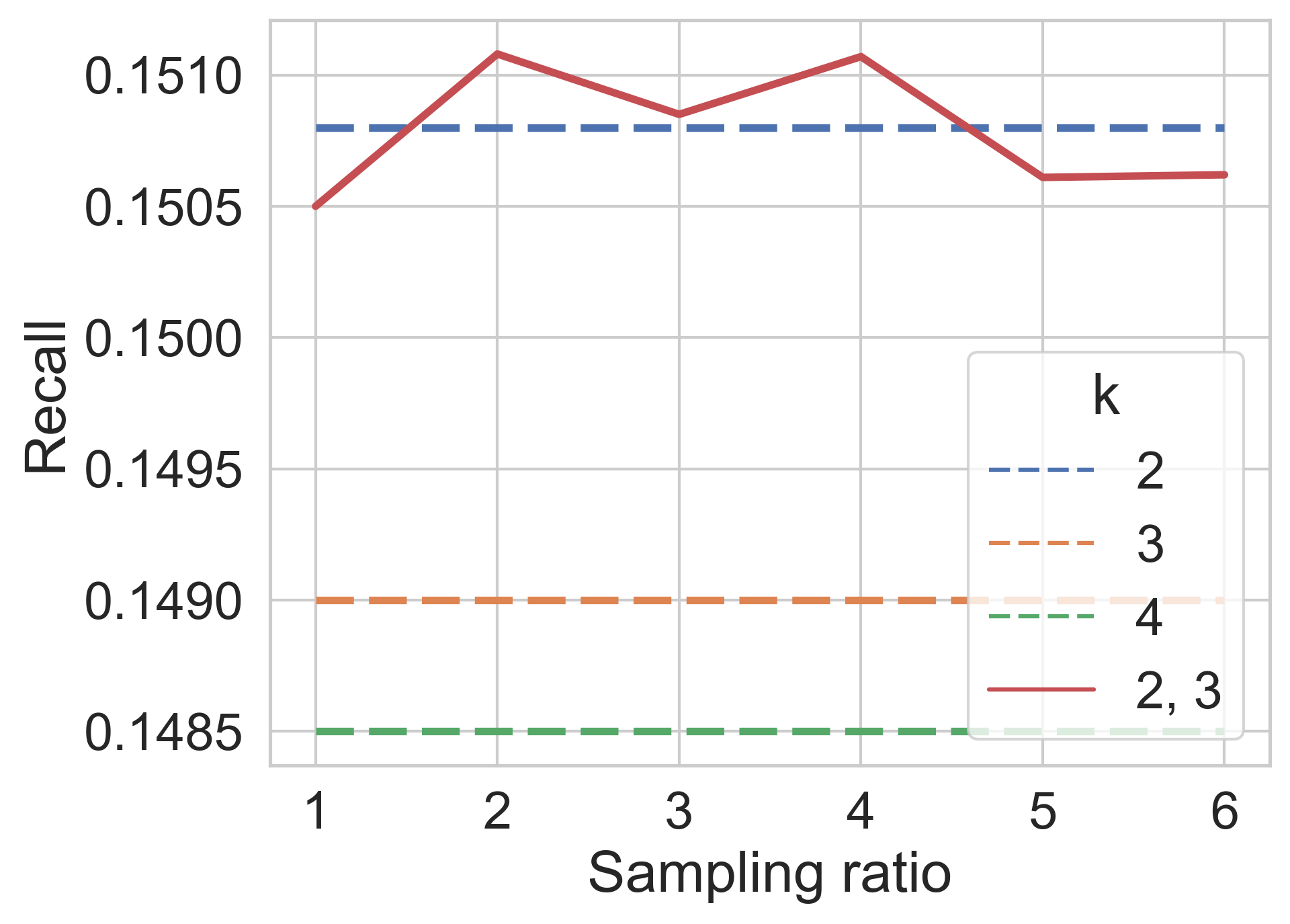}
        \caption{Netflix}
        \vspace{-8pt}
    \end{subfigure}
    \caption{Comparison of sample compositions.}
    \label{fig:composition}
    \vspace{-8pt}
\end{figure}
\section{Related Work}

Generally, there are three research lines for unbiased recommendation on implicit feedback. Here we review each of them.

\subsubsection{Domain Adaptation}
Some studies perform domain adaptation by utilizing a small amount of unbiased data as the target domain to guide the training on biased source data.
For example, CausE~\cite{caus_e} corrects the biased embedding by minimizing its distance from another embedding learned on the unbiased data. KDCRec~\cite{kdc_rec} uses knowledge distillation to transfer the knowledge obtained from biased data to the modeling of unbiased data. However, since the unbiased data is often rare, their performance is limited in real scenarios.
To avoid the usage of unbiased data, ESAM~\cite{esam} treats displayed and non-displayed items as source and target domains, respectively, and minimizes the distance between their covariance matrices.
However, domain adaptation methods often suffer from loss of domain-specific information --- some important item features learned from biased data may be lost when aligning with the features from unbiased data.

\subsubsection{Counterfactual Learning}
Counterfactual learning methods utilize causal graphs to handle the effect of item popularity on recommendation.
DICE~\cite{dice} uses two separate embeddings for each user and item to model the causal effects of user-item relevance and exposure mechanism, respectively, by training with cause-specific data.
MACR~\cite{macr} and CR~\cite{cr} removes the direct effect of item properties on predicted scores by causal inference.
DIB~\cite{dib} and PDA~\cite{pda} both remove the confounding popularity bias during training, but PDA~\cite{pda} further inject the future popularity to the scores during inference.
Unlike our proposed method that focuses on constructing an unbiased loss function, these approaches deal with the popularity bias from a different point of view --- they analyze the causal effect between the bias and the observed data and then apply causal operations accordingly.

\subsubsection{Inverse Propensity Score (IPS)}
IPS is a prevalent debiasing method for recommender systems \cite{rec_treat,causal_inference}, and recently explored in the implicit recommendation setting. Rel-MF~\cite{relmf} is the first IPS method designed for implicit feedback, which reweights the pointwise loss by the estimated item exposure probability to obtain an expectation-unbiased loss function. Later, UBPR~\cite{ubpr} is proposed to extend the pointwise model in Rel-MF to a pairwise version.
UEBPR~\cite{uebpr} and DU~\cite{du} introduce neighborhood-based explainability and unclicked data reweighting to the plain IPS methods, respectively.
AutoDebias~\cite{autodebias} combines IPS with data imputation and adopts a meta-learning algorithm to learn the optimal debiasing configurations on a small uniform data.
These IPS methods mainly have two drawbacks: the difficulty in estimating propensity scores and the high variance of the reweighted loss.
\section{Conclusion and Future Work}

In this work, we revisit the commonly used pointwise and pairwise loss functions from a new perspective, and point out that they are biased in approaching the correct ranking of user preference.
Under the assumption that the exposure probability can be factorized as user propensity, item propensity, and user-item relevance, we ascribe the biasedness in traditional loss functions to user/item propensity.
We then propose a new loss, CPR, where the impact of the user/item propensity is offset by the combination of carefully selected user-item pairs, so as to approach the unbiased ranking of user preference.
Experimental results show that CPR achieves better debiasing performance on multiple backbone models.

This initial work sheds a light on mitigating the popularity bias without the knowledge of the exposure mechanism, with the help of a mild assumption.
One limitation of CPR is that more interactions in one sample would bring inflexibility to training, which prevents us from using more types of samples together to further improve the recommendation accuracy. We hope to solve this problem in future work.
Another limitation is that the proposed assumption of the exposure probability may not reflect the real exposure mechanism accurately.
We hope to propose a more general assumption in future work.
Moreover, we would like to extend CPR to other scenarios, such as exploring its ability to alleviate the group fairness bias \cite{fairness_pairwise}, combining it with list ranking \cite{cpr_online} or graph learning \cite{refine,sgl,dir}.

\begin{acks}
  This work is supported by the National Key Research and Development Program of China (2020AAA0106000), and the National Natural Science Foundation of China (U19A2079, 62121002).
\end{acks}

\bibliographystyle{ACM-Reference-Format}
\bibliography{cpr-ref}

\end{document}